\documentclass[11pt]{article}
\usepackage{amssymb}
\oddsidemargin   = 0 cm \evensidemargin  = 0 cm \textwidth  = 14
cm \textheight = 22 cm \headheight=0cm \topskip=0cm \topmargin=0cm
\newtheorem{precor}{{\bf Corollary}}

\newtheorem{precon}{{\bf Conjecture}}

\newtheorem{predefin}{{\bf Definition}}

\newenvironment{defin}[1]{\begin{predefin}{\hspace{-0.5
                   em}{\bf.\ }}{\rm
#1}\hfill{$\spadesuit$}}{\end{predefin}}
\newtheorem{preexm}{{\bf Example}}

\newtheorem{preappl}{{\bf Application}}

\newtheorem{prelem}{{\bf Lemma}}

\newenvironment{lem}{\begin{prelem}{\hspace{-0.5
               em}{\bf.\ }}}{\end{prelem}}
\newtheorem{preproof}{{\bf Proof.\ }}

\newenvironment{proof}[1]{\begin{preproof}{\rm
               #1}\hfill{$\blacksquare$}}{\end{preproof}}
\newtheorem{presproof}{{\bf Sketch of Proof.\ }}

\newtheorem{prethm}{{\bf Theorem}}

\newenvironment{thm}{\begin{prethm}{\hspace{-0.5
               em}{\bf.\ }}}{\end{prethm}}
\newtheorem{prealphthm}{{\bf Theorem}}

\newenvironment{alphthm}{\begin{prealphthm}{\hspace{-0.5
               em}{\bf.\ }}}{\end{prealphthm}}
\newtheorem{prepro}{{\bf Proposition}}

\newtheorem{preprb}{{\bf Problem}}

\def\conct[#1,#2]{\mbox {${#1} \leftrightarrow {#2}$}}
\def\dconct[#1,#2]{\mbox {${#1} \rightarrow {#2}$}}
\def\deg[#1,#2]{\mbox {$d_{_{#1}}(#2)$}}
\def\mindeg[#1]{\mbox {$\delta_{_{#1}}$}}
\def\maxdeg[#1]{\mbox {$\Delta_{_{#1}}$}}
\def\outdeg[#1,#2]{\mbox {$d_{_{#1}}^{^+}(#2)$}}
\def\minoutdeg[#1]{\mbox {$\delta_{_{#1}}^{^+}$}}
\def\maxoutdeg[#1]{\mbox {$\Delta_{_{#1}}^{^+}$}}
\def\indeg[#1,#2]{\mbox {$d_{_{#1}}^{^-}(#2)$}}
\def\minindeg[#1]{\mbox {$\delta_{_{#1}}^{^-}$}}
\def\maxindeg[#1]{\mbox {$\Delta_{_{#1}}^{^-}$}}
\def\isdef{\mbox {$\ \stackrel{\rm def}{=} \ $}}
\def\dre[#1,#2,#3]{\mbox {${\cal E}_{_{#3}}(#1,#2)$}}
\def\pdre[#1,#2,#3]{\mbox {${\cal P}_{_{#3}}(#1,#2)$}}
\def\var[#1,#2]{\mbox {${\rm Var}_{_{#1}}(#2)$}}
\def\ls[#1]{\mbox {$\xi^{^{#1}}$}}
\def\hom[#1,#2]{\mbox {${\rm Hom}({#1},{#2})$}}
\def\onvhom[#1,#2]{\mbox {${\rm Hom^{v}}(#1,#2)$}}
\def\onehom[#1,#2]{\mbox {${\rm Hom^{e}}(#1,#2)$}}
\def\core[#1]{\mbox {$#1^{^{\bullet}}$}}
\def\cay[#1,#2]{\mbox {${\rm Cay}({#1},{#2})$}}
\def\cays[#1,#2]{\mbox {${\rm Cay_{s}}({#1},{#2})$}}
\def\dirc[#1]{\mbox {$\stackrel{\rightarrow}{C}_{_{#1}}$}}
\def\cycl[#1]{\mbox {${\bf Z}_{_{#1}}$}}


\begin{document}
\footnotetext[1]{Correspondence should be addressed to {\tt
hhaji@sbu.ac.ir}.}

\footnotetext[2]{$^\ast$ This paper is partially supported by
Shahid Beheshti University.}
\begin{center}
{\Large \bf Bounds for Visual
Cryptography Schemes}\\
\vspace*{0.5cm}
{\bf Hossein Hajiabolhassan $^\ast$ and Abbas Cheraghi}\\
{\it Department of Mathematical Sciences}\\
{\it Shahid Beheshti University, G.C.}\\
{\it P.O. Box {\rm 1983963113}, Tehran, Iran}\\
{\tt hhaji@sbu.ac.ir}\\
{\it Department of Mathematics} \\
{\it Faculty of Khansar} \\
{\it University of Isfahan}\\
{\it Isfahan, Iran}\\
{\tt cheraghi@sci.ui.ac.ir}\\ \ \\
\end{center}
\begin{abstract}
In this paper, we investigate the best pixel expansion of
the various models of visual cryptography schemes. In this regard,
we consider visual cryptography schemes introduced by Tzeng and
Hu \cite{TZH}. In such a model, only minimal qualified sets can
recover the secret image and that the recovered secret image can
be darker or lighter than the background. Blundo et al.
\cite{BLC} introduced a lower bound for the best pixel expansion
of this scheme in terms of minimal qualified sets. We present
another lower bound for the best pixel expansion of the scheme.
As a corollary, we introduce a lower bound, based on an induced
matching of hypergraph of qualified sets, for the best pixel
expansion of the aforementioned model and the traditional model of
visual cryptography realized by basis matrices. Finally, we study
access structures based on graphs and we present an upper bound
for the smallest pixel expansion in terms of strong chromatic
index.
\begin{itemize}
\item[]{{\footnotesize {\bf Key words:}\ visual cryptography, secret sharing scheme, hypergraph, basis matrices, pixel expansion, contrast.}}
\item[]{ {\footnotesize {\bf Subject classification:} 94A62.}}
\end{itemize}
\end{abstract}
\section{Introduction}
{\it Visual cryptography schemes} (VCS) are a special kind of
secret sharing schemes in which secret is an image. For a set
${\cal P}$ of $n$ participants, a VCS  encrypts a secret image
into $n$  transparencies which constitute the shares given to the
$n$ participants. The power set of participants is usually
divided into {\it qualified} sets, which can visually recover the
secret image by stacking their transparencies without any
cryptography knowledge, and {\it forbidden} sets which have no
information on the secret image.

The fascinating idea of visual cryptography was first introduced by
Naor and Shamir \cite{nash}. Naor and Shamir \cite{nash} have
proved that the pixel expansion of any visual $k$ out of $k$
scheme must be at least $2^{k-1}$. Also, they have presented a
visual $k$ out of $k$ scheme with pixel expansion $2^{k-1}$.

Most papers on visual cryptography investigate two parameters, the
pixel expansion and the contrast. The pixel expansion is the
number of subpixels used to encode each pixel of the secret image
in a share, that should be as small as possible. The contrast
measures the ``difference" between a black and a white pixel in
the reconstructed image. Several results on the contrast and the
pixel expansion of VCSs can be found in \cite{1, 3, 2, BLC, DRO,
4, nash, 6}. Finding the best pixel expansion in the different models
of VCS is the main challenge in visual cryptography. The problem
of determining the best visual contrast (regardless of pixel
expansion) is completely resolved \cite{KU, KUS}, so that, for the
sake of completeness, it is interesting to find bound for the
pixel expansion.

In this paper, we investigate the best pixel expansion of
the different models of visual cryptography schemes. In the second
section, we present several models of visual cryptography
schemes. In the third section, we introduce some lower bounds for
the best pixel expansion of different models of visual
cryptography schemes. In this regard, we consider visual
cryptography schemes introduced by Tzeng and Hu \cite{TZH}. In
such a model, only minimal qualified sets can recover the secret
image and that the recovered secret image can be darker or
lighter than the background. Blundo et al. \cite{BLC} introduced
a lower bound for the best pixel expansion of this scheme in
terms of minimal qualified sets. We present another lower bound
for the best pixel expansion of the scheme. As a corollary, we
introduce a lower bound, based on an induced matching of
hypergraph of qualified sets, for the best pixel expansion of the
aforementioned model and the traditional model of visual
cryptography realized by basis matrices. Finally, we study access
structures based on graphs and we present an upper bound for the
smallest pixel expansion in terms of strong chromatic index.
\section{Models and Notations}
First, we mention some of definitions and notations which are
referred to throughout the paper. Hereafter, the symbol ${\cal P}$
stands for the set of participants. Furthermore, we assume that
${\cal P}=\{1,2,\ldots,n\}$ and let $2^{\cal P}$ denote the power
set of ${\cal P}$, i.e., the set of all subsets of ${\cal P}$. A
family ${\cal Q}\subseteq 2^{\cal P}$ is said to be {\it
monotone} if for any $A\in {\cal Q}$ and any $B\subseteq {\cal
P}$ such that $A\subseteq B$, it holds that $B\in {\cal Q}$.
Throughout the paper we assume that ${\cal F}, {\cal Q}\subseteq
2^{\cal P}$, where ${\cal Q}\cap {\cal F}=\emptyset$, ${\cal
Q}\cup {\cal F}=2^{\cal P}$ and ${\cal Q}$ is monotone. The
members of ${\cal Q}$ and ${\cal F}$ are termed {\it qualified}
sets and {\it forbidden} sets, respectively. Denote the minimal qualified sets of  ${\cal Q}$ by ${\cal Q}_0$. Also, we call
$\Gamma=({\cal P},{\cal
Q},{\cal F})$ the {\it access structure} of the scheme.\\

Let $M=M_{n\times m}$ be an $n\times m$ Boolean matrix. The $i$th
row vector of M is denoted by $M_{i}$. Set $M_{i}\circ M_{j}$ to be
the bit-wise ``OR'' of vectors $M_{i}$ and $M_{j}$. Suppose
$X=\{i_{1},i_{2},\ldots ,i_{q}\}\subseteq \{1, 2, \ldots, n\}$,
and define $M_{_{X}}\isdef M_{i_{1}}\circ M_{i_{2}}\circ
\cdots\circ M_{i_{q}}$, whereas $M[X]\isdef M[X][\{1,\ldots,
m\}]$ denotes the $|X|\times m$ matrix obtained from $M$ by
considering only the rows corresponding to members of $X$. Let
$A||B$ denote the concatenation of two matrices $A$ and $B$ of
the same number of rows. Denote the Hamming weight of row vector
$v$ by $w(v)$. For two vectors $u$ and $v$, denote their inner
product by $u\cdot v$. Let $T$ be a set of vectors. The vector
space generated by $T$ is denoted by $span(T)$.\\
In visual cryptography schemes we assume that the secret image
consists of a collection of black and white pixels. Each pixel of
secret image appears in $n$ modified versions called shares, one
for each transparency, and each share is divided into $m$ black
and white subpixels. The $m$ subpixels of shares can be
represented by an $n\times m$ Boolean matrix $M=[m_{ij}]$ where
$m_{ij}=1$ if and only if $j$th subpixel in the $i$th
transparency is black. The resultant shares should meet the
properties of visual cryptography. The conventional definition
for VCS is as follows.
\begin{defin} {Let $\Gamma=({\cal P},{\cal Q},{\cal F})$ be an access structure where $|{\cal P}|=n$. Two collections
(multisets) ${\cal C}^{0}$ and ${\cal C}^{1}$ of $n\times m$ Boolean matrices
constitute a $(\Gamma ,m)$-${\rm VCS}_1$ if there exist a value
$\alpha(m)>0$ and a set $\{(X, t_{X})\}_{X\in {\cal Q}}$
satisfying
\begin{enumerate}
\item Any qualified set $X=\{i_1,i_2,\ldots ,i_q\}\in {\cal Q}$ can
recover the shared image by stacking their transparencies.
Formally, for any $M\in {\cal C}^{0},\ \omega(M_X)\leq t_{X}-\alpha
(m)\cdot m$, whereas for any $M'\in {\cal C}^{1}$, $\omega (M'_X)\geq
t_X$. \item Any forbidden set $X=\{i_1,i_2,... ,i_q\}\in {\cal
F}$ has no information on the shared image. Formally, the two
collections
 ${\cal D}^t ,t\in \{0,1\}$, of $q\times m$ matrices obtained by restricting
 each $n\times m$ matrix in $M\in {\cal C}^t$ to
rows $i_1, i_2,\ldots , i_q$, are indistinguishable in the sense
that they contain the same matrices with the same frequencies.
\end{enumerate}
The value $m$ is called {\it pixel expansion}, and the value
$\alpha(m)$ is termed {\it contrast}. The first and second
conditions are called {\it contrast} and {\it security},
respectively. The notation $m_1(\Gamma)$ stands for the minimum
value of $m$ for which such a $\Gamma$-${\rm VCS}_1$ exists and
called {\it the best pixel expansion} of $\Gamma$-${\rm VCS}_1$.}
\end{defin}
The most of constructions in this paper are based on two $n\times
m$ matrices, $S^0$ and $S^1$ called {\it basis matrices}. In this case,
the collections ${\cal C}^0$ and ${\cal C}^1$ are generated by permuting the
columns of the corresponding basis matrices $S^0$ and $S^1$,
respectively, in all possible ways.
\begin{defin}{\label{VIS} Let $\Gamma=({\cal P},{\cal Q},{\cal F})$ be an access structure
where $|{\cal P}|=n$. A $(\Gamma
,m)$-${\rm VCS}_2$ is realized using two basis matrices $S^0$ and
$S^1$ of $n\times m$ Boolean matrices if there exist a value
$\alpha(m)>0$ and a set $\{(X, t_{X})\}_{X\in {\cal Q}}$
satisfying
\begin{enumerate}
\item Any qualified set $X=\{i_1,i_2,\ldots ,i_q\}\in {\cal Q}$ can
recover the shared image by stacking their transparencies.
Formally, $\omega(S^0_X)\leq t_{X}-\alpha (m)\cdot m$, whereas
$\omega (S^1_X)\geq t_X$.
\item Any forbidden set
$X=\{i_1,i_2,... ,i_q\}\in {\cal F}$ has no information on the
shared image. Formally, the two $q\times m$ matrices obtained by
restricting $S^0$ and $S^1$ to rows $i_1,i_2,\ldots ,i_q $ are
equal up to a column permutation.
\end{enumerate}

The notation $m_2(\Gamma)$ stands for the minimum value of $m$ for
which such a $\Gamma$-${\rm VCS}_2$ with basis matrices exists and
called {\it the best pixel expansion} of $\Gamma$-${\rm VCS}_2$.}
\end{defin}

Now, we recall the definition of visual cryptography scheme
defined in \cite{TZH}. In this scheme, only the sets in ${\cal
Q}_0$ can recover the secret image by stacking their
transparencies and that the image, which is revealed by stacking
the transparencies of a minimal qualified set, can be darker or
lighter than the background. Note that any non-minimal qualified
set, by stacking their transparencies, has no information on the
shared image, i.e., cannot distinguish a white pixel from a black
one. Here is the formal definition \cite{TZH}.

\begin{defin}{\label{DAR} Let $\Gamma=({\cal P},{\cal Q},{\cal F})$ be an access structure where $|{\cal P}|=n$. A $(\Gamma
,m)$-${\rm VCS}_3$ is realized using two basis matrices $S^0$ and
$S^1$ of $n\times m$ Boolean matrices if there exist a value
$\alpha(m)>0$ and a set $\{(X, t_{X})\}_{X\in {\cal Q}_0}$
satisfying
\begin{enumerate}
\item Any minimal qualified set $X=\{i_1,i_2,\ldots ,i_q\}\in {\cal Q}_0$ can
recover the shared image by stacking their transparencies.
Formally, $\omega(S^0_X)= t_{X}$, whereas either, $\omega
(S^1_X)\geq t_X+\alpha (m)\cdot m$ or, $\omega (S^1_X)\leq
t_X-\alpha (m)\cdot m$.

\item Any forbidden set $X=\{i_1,i_2,\ldots
,i_q\}\in {\cal F}$ has no information on the shared image.
Formally, the two $q\times m$ matrices obtained by restricting
$S^0$ and $S^1$ to rows $i_1,i_2,\ldots ,i_q $ are equal up to a
column permutation.

\item Any non-minimal qualified set $X=\{i_1,i_2,\ldots ,i_q\}\in {\cal Q}\setminus
{\cal Q}_0$, by stacking  their transparencies, has no
information on shared image. Formally, the two $1\times m$
vectors $V_0$ and $V_1$, obtained by OR-ing the rows
$i_1,i_2,\ldots ,i_q $ of the matrix $S^0$ and $S^1$,
respectively, are indistinguishable in the sense that they have
the same Hamming weight.
\end{enumerate}

Also, we will use the notation $m_3(\Gamma)$ to denote the
minimum pixel expansion of basis matrices of $\Gamma$-${\rm VCS}_3$ and
called {\it the best pixel expansion of ${\rm VCS}_3$}.}
\end{defin}

To achieve the smallest pixel expansion in the different models of
visual cryptography schemes, we consider the following definition
in which the minimal qualified subsets can recover the shared
image by stacking their transparencies. Also, the revealed image
can be darker or lighter than the background as well. Moreover,
we don't mind whether non-minimal qualified subsets can obtain
the secret.

\begin{defin} {Let $\Gamma=({\cal P},{\cal Q},{\cal F})$ be an access structure where $|{\cal P}|=n$. Two collections
(multisets) ${\cal C}^{0}$ and ${\cal C}^{1}$ of $n\times m$ Boolean matrices
constitute a $(\Gamma ,m)$-${\rm VCS}_4$ if there exist a value
$\alpha(m)>0$ and a set $\{(X, t_{X})\}_{X\in {\cal Q}_0}$
satisfying
\begin{enumerate}
\item Any minimal qualified set $X=\{i_1,i_2,\ldots ,i_q\}\in {\cal Q}_0$ can
recover the shared image by stacking their transparencies.
Formally, for any $M\in {\cal C}^{0}$, $\omega(M_X)= t_{X}$, whereas for
any $M' \in {\cal C}^{1}$, either, $\omega (M'_X)\geq t_X+\alpha
(m)\cdot m$ or, $\omega (M'_X)\leq t_X-\alpha (m)\cdot m$.

\item Any forbidden set $X=\{i_1,i_2,... ,i_q\}\in {\cal
F}$ has no information on the shared image. Formally, the two
collections  ${\cal D}^t ,t\in \{0,1\}$, of $q\times m$ matrices
obtained by restricting  each $n\times m$ matrix in $M\in {\cal C}^t$ to
rows $i_1, i_2,\ldots , i_q$, are indistinguishable in the sense
that they contain the same matrices with the same frequencies.
\end{enumerate}

The notation $m_4(\Gamma)$ stands for the minimum value of $m$ for
which such a $\Gamma$-${\rm VCS}_4$ exists and called {\it the
best pixel expansion} of $\Gamma$-${\rm VCS}_4$.}
\end{defin}

Most constructions in this paper are realized using basis
matrices; hence, we consider the following definition as well.

\begin{defin}{\label{DARNEW} Let $\Gamma=({\cal P},{\cal Q},{\cal F})$ be an access structure where $|{\cal P}|=n$.
A $(\Gamma, m)$-${\rm VCS}_5$ is realized using two basis matrices
$S^0$ and $S^1$ of $n\times m$ Boolean matrices if there exist a
value $\alpha(m)>0$ and a set $\{(X, t_{X})\}_{X\in {\cal Q}_0}$
satisfying
\begin{enumerate}
\item Any minimal qualified set $X=\{i_1,i_2,\ldots ,i_q\}\in {\cal Q}_0$ can
recover the shared image by stacking their transparencies.
Formally, $\omega(S^0_X)= t_{X}$, whereas either, $\omega
(S^1_X)\geq t_X+\alpha (m)\cdot m$ or, $\omega (S^1_X)\leq
t_X-\alpha (m)\cdot m$.

\item Any forbidden set $X=\{i_1,i_2,\ldots
,i_q\}\in {\cal F}$ has no information on the shared image.
Formally, the two $q\times m$ matrices obtained by restricting
$S^0$ and $S^1$ to rows $i_1,i_2,\ldots ,i_q $ are equal up to a
column permutation.
\end{enumerate}

Also, we will use the notation $m_5(\Gamma)$ to denote the
minimum pixel expansion of basis matrices of $\Gamma$-${\rm VCS}_5$ and
called {\it the best pixel expansion of ${\rm VCS}_5$}.}
\end{defin}

It is instructive to add some notes on different models we have
introduced so far. First, ${\rm VCS}_1$ has been considered. In
fact, ${\rm VCS}_1$ is the traditional model of VCS which was
introduced by M. Naor and A. Shamir \cite{nash}. Constructing the
families ${\cal C}^0$ and ${\cal C}^1$, mentioned in ${\rm VCS}_1$, may seem a
daunting task. However, it can be more convenient to handle with
basis matrices. Hence, basis matrices are used in the most
constructions of VCS found in the literature. That is why we
consider the models ${\rm VCS}_2$, ${\rm VCS}_3$ and ${\rm
VCS}_5$. Finally, we consider ${\rm VCS}_4$ to achieve the
smallest pixel expansion among the different models of visual
cryptography schemes.
\section{Lower Bounds for Pixel Expansion}
In this section, we introduce some lower bounds for the best pixel
expansion of the different models of visual cryptography schemes.
First, for a given access structure $\Gamma =({\cal P}, {\cal Q},
{\cal F})$, we present a lower bound for the best pixel expansion
of $\Gamma $-${\rm VCS}_2$ as follows.

\begin{thm}\label{main}
Let $\Gamma =({\cal P}, {\cal Q}, {\cal F})$ be an access
structure and  let $\Omega=\{F_1, F_2,\ldots , F_t\}$ be a
collection of forbidden sets such that $\displaystyle{\bigcup
_{i=1}^t}F_i \in {\cal F}$. Also, assume that for any two
disjoint non-empty subsets $A, B \subset \Omega$, there exists a
forbidden set $F'\in {\cal F}$ such that at least one of the
following conditions holds

\begin{itemize}
\item For any $F_i \in A$, $F_i\cup F' \in {\cal F}$ and there
exists an $F_j\in B$ such that $F_j\cup F' \in {\cal Q}$.

\item For any $F_i \in B$, $F_i\cup F' \in {\cal F}$ and there
exists an $F_j\in A$ such that $F_j\cup F' \in {\cal Q}$.
\end{itemize}

Then we have $m_2(\Gamma) \geq t+1$.
\end{thm}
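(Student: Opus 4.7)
The approach is linear algebra in $\mathbb{R}^m$, where $m = m_2(\Gamma)$ is the pixel expansion of a fixed basis-matrix realization $S^0, S^1$ of $\Gamma$-${\rm VCS}_2$. Because $F := \bigcup_{i=1}^t F_i \in \mathcal{F}$, Definition~\ref{VIS} forces $S^0[F]$ and $S^1[F]$ to agree up to a column permutation; after permuting the columns of $S^1$ I may assume $S^0[F] = S^1[F]$. Consequently, for every $F_i \subseteq F$ the row-OR vectors coincide and I set $u_i := S^0_{F_i} = S^1_{F_i} \in \{0,1\}^m$. The goal is to show that the $t+1$ vectors $\mathbf{1}, u_1, \ldots, u_t$ are linearly independent in $\mathbb{R}^m$, which immediately yields $m \geq t+1$.

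The key tool is an inner-product translation of the contrast and security conditions. For $F' \in \mathcal{F}$ set $\delta_{F'} := S^0_{F'} - S^1_{F'} \in \mathbb{R}^m$. Expanding $\omega(S^t_{X \cup F'}) = \omega(S^t_X) + \omega(S^t_{F'}) - S^t_X \cdot S^t_{F'}$ for $X \subseteq F$, and invoking the Hamming-weight equalities $\omega(S^0_Y) = \omega(S^1_Y)$ valid for every $Y \in \mathcal{F}$, one obtains: $\mathbf{1} \cdot \delta_{F'} = 0$ always; $u_i \cdot \delta_{F'} = 0$ whenever $F_i \cup F' \in \mathcal{F}$; and $u_i \cdot \delta_{F'} > 0$ whenever $F_i \cup F' \in \mathcal{Q}$. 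In particular $u_i \cdot \delta_{F'} \geq 0$ for every $i$ and every $F' \in \mathcal{F}$.

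Now assume for contradiction a non-trivial relation $\lambda_0 \mathbf{1} + \sum_{i=1}^t \lambda_i u_i = 0$, and split $\{1, \ldots, t\}$ into $A = \{i : \lambda_i > 0\}$ and $B = \{i : \lambda_i < 0\}$, regarded as sub-collections of $\Omega$. The principal case is that both $A$ and $B$ are non-empty: apply the theorem's hypothesis to the pair $(A, B)$ to obtain some $F' \in \mathcal{F}$ satisfying (say) Condition 1, so that $u_i \cdot \delta_{F'} = 0$ for every $i \in A$ and $u_j \cdot \delta_{F'} > 0$ for some witness $j \in B$. Dotting the relation with $\delta_{F'}$ gives $\sum \lambda_i (u_i \cdot \delta_{F'}) = 0$; yet the $A$-terms vanish, every $B$-term is non-positive (a negative $\lambda_i$ times a non-negative $u_i \cdot \delta_{F'}$) and the witness term is strictly negative, the desired contradiction.

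The obstacle I expect is the ``same-sign'' degenerate case where $B = \emptyset$ (the case $A = \emptyset$ is symmetric). Here the relation rearranges to $\sum_{i \in A} \lambda_i u_i = |\lambda_0| \mathbf{1}$ with $\lambda_0 < 0$; dotting with $\delta_{F'}$ for an arbitrary $F' \in \mathcal{F}$ and arguing term-by-term via non-negativity forces $u_i \cdot \delta_{F'} = 0$ for every $i \in A$, i.e.\ $F_i \cup F' \in \mathcal{F}$ for every $i \in A$ and every $F' \in \mathcal{F}$. To close, apply the hypothesis to a well-chosen pair, namely $(A, \Omega \setminus A)$ when $A \subsetneq \Omega$, or any disjoint non-empty pair when $A = \Omega$: whichever of Conditions 1 or 2 the hypothesis supplies must insist on some $F_j \cup F' \in \mathcal{Q}$ for a $j$ lying on the side where the previous step has already forced $F_j \cup F' \in \mathcal{F}$, the required contradiction. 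This final step is where the full strength of the hypothesis --- witnesses for \emph{every} disjoint non-empty pair inside $\Omega$, not merely a single partition of $\Omega$ --- is essential, and is the technically delicate point of the proof.
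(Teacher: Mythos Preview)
Your approach mirrors the paper's: prove linear independence over $\mathbb{R}$ by dotting an assumed dependence against suitable test vectors, using the security condition on $\bigcup F_i$ to transfer identities between $S^0$ and $S^1$. The only real difference is your choice of the $(t{+}1)$-st vector: you adjoin $\mathbf{1}$, whereas the paper first shows $\{S^1_{F_i}\}$ is independent and then adjoins $S^1_{F'}$ for a suitable $F'$. Your treatment of the principal case (both $A$ and $B$ non-empty) is correct and arguably cleaner than the paper's, since working with $\delta_{F'}=S^0_{F'}-S^1_{F'}$ packages the contrast/security dichotomy as the single sign fact $u_i\cdot\delta_{F'}\ge 0$.

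The degenerate case, however, is not closed. When $A\subsetneq\Omega$ you propose the pair $(A,\Omega\setminus A)$; but if the hypothesis returns Condition~1, the qualified witness $j$ lies in $\Omega\setminus A$, where your ``previous step'' says nothing, so no contradiction arises. (Taking the pair \emph{inside} $A$ would rescue the case $|A|\ge 2$, but $|A|=1$ still survives.) This gap is not merely cosmetic: with $\mathcal{P}=\{1,2,3\}$, $\mathcal{Q}_0=\{\{2,3\}\}$, and $\Omega=\{\{1\},\{2\}\}$, the hypothesis is satisfied (take $F'=\{3\}$, which gives Condition~1 for $(\{\{1\}\},\{\{2\}\})$), yet the obvious $2$-column basis matrices with a zero row for participant~$1$ show $m_2(\Gamma)=2<3$. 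The paper's own proof silently skips the same-sign case as well; the obstruction is with the statement as written (it does not exclude ``dummy'' $F_i$), not with your overall strategy.
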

\begin{proof}
{Let  $S^0$ and $S^1$ be $n\times m_2(\Gamma)$ the basis matrices of
access structure $\Gamma =({\cal P}, {\cal Q}, {\cal F})$.
Consider the following sets
$$T^0\isdef \{S^0_{_{F_i}}\ | \ 1\leq i \leq t\},$$
$$T^1\isdef \{S^1_{_{F_i}}\ |\ 1\leq i \leq t\}.$$
We claim that the vectors of $T^0$ (resp. $T^1$) are linearly
independent over the real numbers. On the contrary, suppose that
there are some real coefficients $a_{_{F_i}}$ such that
$\displaystyle{\sum_{i=1}^t} a_{_{F_i}} S^0_{_{F_i}}=0$. Define
$Y=\displaystyle{\bigcup _{i=1}^t}F_i$. Since $Y \in {\cal F}$,
$S^0[Y]$ and $S^1[Y]$ are equal up to a permutation of columns;
hence, $\displaystyle{\sum_{i=1}^t} a_{_{F_i}} S^1_{_{F_i}}= 0$.
Therefore,

$$\displaystyle{\sum_{a_{_{F_i}} > 0}} a_{_{F_i}}
S^0_{_{F_i}}=- \displaystyle{\sum_{a_{_{F_j}} < 0}} a_{_{F_j}}
S^0_{_{F_j}}\quad \& \quad \displaystyle{\sum_{a_{_{F_i}}
> 0}} a_{_{F_i}} S^1_{_{F_i}}=-\displaystyle{\sum_{a_{_{F_j}} < 0}}
a_{_{F_j}} S^1_{_{F_j}}.
$$
Set $A\isdef \{F_j| a_{_{F_j}} < 0\}$ and $B\isdef \{F_i|
a_{_{F_i}} > 0\}$. Without loss of generality, assume that there
exists a forbidden set $F'$ such that for any $F_j \in A$,
$F_j\cup F' \in {\cal F}$ and there exists a member of $B$ such
as $F''$ such that $F''\cup F' \in {\cal Q}$. We have
\begin{equation}\label{coff}
S^0_{_{F'}}\cdot\displaystyle{\sum_{a_{_{F_i}} > 0}} a_{_{F_i}}
S^0_{_{F_i}}= S^0_{_{F'}}\cdot \displaystyle{\sum_{a_{_{F_j}} <
0}} -a_{_{F_j}} S^0_{_{F_j}}\ \ \& \ \ S^1_{_{F'}}\cdot
\displaystyle{\sum_{a_{_{F_i}}
> 0}} a_{_{F_i}} S^1_{_{F_i}}= S^1_{_{F'}}\cdot \displaystyle{\sum_{a_{_{F_j}} < 0}}
-a_{_{F_j}} S^1_{_{F_j}}.
\end{equation}
Note that $w(S^0_{_{F'}})=w(S^1_{_{F'}})$. Also, for any $F_j \in
A$, $F'\cup F_j  \in {\cal F}$; consequently, $S^0[F_j\cup F']$
and $S^1[F_j\cup F']$ are equal up to a permutation of columns.
Hence, $S^0_{_{F'}}\cdot S^0_{_{F_j}} =S^1_{_{F'}}\cdot
S^1_{_{F_j}}$. Moreover, there exists an $F'' \in B$ such that
$F''\cup F' \in {\cal Q}$. Thus,
$$S^0_{_{F'}}\cdot
S^0_{_{F''}}  > S^1_{_{F'}}\cdot S^1_{_{F''}}.$$
In view of Equations~~\ref{coff}, one can obtain that for any
$F_i \in \Omega$, $a_{_{F_i}}=0$; accordingly,  $\dim({\rm
span}(T^0))= \dim({\rm span}(T^1))= t$. In addition, $F'\cup Y\in
{\cal Q}$; therefore, $w(S^0_{_{F'\cup Y}}) \not = w(S^1_{_{F'\cup
Y}})$. Also,  $w(S^0_{_{F'}})=w(S^1_{_{F'}})$ and the matrices
$S^0[Y]$ and $S^1[Y]$ are equal up to a permutation of columns.
Now, it is easy to check that $\dim({\rm span}(T^1\cup
\{S^1_{_{F'}}\}))=t+1$. On the other hand, $m_2(\Gamma) \geq
\dim({\rm span}(T^1\cup \{S^1_{_{F'}}\}))$. Thus, $m_2(\Gamma)
\geq t+1$.}
\end{proof}
For a given access structure $\Gamma=({\cal P}, {\cal Q}, {\cal
F})$, a lower bound for the best pixel expansion of $\Gamma
$-${\rm VCS}_3$ has been introduced in \cite{BLC} as follows.

\begin{alphthm}{\rm \cite{BLC}} \label{BBLL}
Let $\Gamma=({\cal P}, {\cal Q}, {\cal F})$ be an access
structure. The best pixel expansion of $\Gamma $-${\rm VCS}_3$
satisfies
$$m_3(\Gamma) \geq \lceil{|{\cal Q}_0| \over 2}\rceil.$$
\end{alphthm}

Note that the aforementioned theorem is not effective when
$|{\cal Q}_0|$ is small. Now, we present a theorem which can be
considered as a counterpart of Theorem \ref{BBLL}.

\begin{thm}\label{mainGEN}
Let $\Gamma =({\cal P}, {\cal Q}, {\cal F})$ be an access
structure and let $\Omega=\{F_1, F_2,\ldots , F_t\}$ be a
collection of forbidden sets such that $\displaystyle{\bigcup
_{i=1}^t}F_i \in {\cal F}$. Also, assume that for any non-empty
subset $A\subset \Omega$, there exist two forbidden sets $F'\in
A$ and $F''\in {\cal F}$ such that $F'\cup F'' \in {\cal Q}_0$
and for any $F_i \in A\setminus \{F'\}$, $F_i\cup F''\not \in
{\cal Q}_0$. Then $m_3(\Gamma) \geq t+1$.
\end{thm}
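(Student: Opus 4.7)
The plan is to imitate the proof of Theorem~\ref{main} by producing $t+1$ linearly independent vectors in $\mathbb{R}^{m_3(\Gamma)}$. I would set $T^0 = \{S^0_{F_i}\}_{i=1}^t$ and $T^1 = \{S^1_{F_i}\}_{i=1}^t$, and use that $Y = \bigcup_i F_i \in \mathcal{F}$, so $S^0[Y]$ and $S^1[Y]$ agree up to a column permutation $\pi$, giving $S^0_{F_i} = S^1_{F_i}\pi$ for every $i$. Consequently any real linear relation on $T^0$ is equivalent to the same relation on $T^1$.

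To show $T^1$ is linearly independent, I would suppose $\sum_i a_i S^1_{F_i} = 0$ nontrivially, set $A = \{F_i : a_i \neq 0\}$, and apply the hypothesis to $A$ to obtain $F' \in A$ and $F'' \in \mathcal{F}$ with $F' \cup F'' \in \mathcal{Q}_0$ while $F_i \cup F'' \notin \mathcal{Q}_0$ for every other $F_i \in A$. Taking inner products of $\sum a_i S^0_{F_i} = 0$ with $S^0_{F''}$ and of $\sum a_i S^1_{F_i} = 0$ with $S^1_{F''}$, then subtracting, would give
\[
\sum_{F_i \in A} a_i \bigl( S^0_{F_i} \cdot S^0_{F''} - S^1_{F_i} \cdot S^1_{F''} \bigr) = 0.
\]
From the inclusion-exclusion identity $\omega(S^t_{F_i \cup F''}) = \omega(S^t_{F_i}) + \omega(S^t_{F''}) - S^t_{F_i} \cdot S^t_{F''}$, combined with the fact that $\omega(S^0_X) = \omega(S^1_X)$ for every $X \in \mathcal{F} \cup (\mathcal{Q} \setminus \mathcal{Q}_0)$ (conditions 2 and 3 of Definition~\ref{DAR}), the bracketed difference would vanish for $F_i \in A \setminus \{F'\}$, while for $F_i = F'$ it would be nonzero since $F' \cup F'' \in \mathcal{Q}_0$. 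This forces $a_{F'} = 0$, contradicting $F' \in A$.

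For the $(t+1)$-st vector I would take the all-ones vector $\mathbf{1} \in \mathbb{R}^{m_3(\Gamma)}$ and show $\mathbf{1} \notin \mathrm{span}(T^1)$. If $\mathbf{1} = \sum_i c_i S^1_{F_i}$, then applying $\pi$ and using $\mathbf{1}\pi = \mathbf{1}$ yields $\mathbf{1} = \sum_i c_i S^0_{F_i}$ with the same coefficients. Letting $C = \{F_i : c_i \neq 0\}$, which is non-empty, and applying the hypothesis to $C$ gives the corresponding $F'$ and $F''$. Dotting the two expressions for $\mathbf{1}$ with $S^0_{F''}$ and $S^1_{F''}$ respectively produces $\omega(S^0_{F''}) = \sum_i c_i S^0_{F_i} \cdot S^0_{F''}$ and $\omega(S^1_{F''}) = \sum_i c_i S^1_{F_i} \cdot S^1_{F''}$; since $F'' \in \mathcal{F}$ forces $\omega(S^0_{F''}) = \omega(S^1_{F''})$, subtracting collapses the identity to $c_{F'} ( S^0_{F'} \cdot S^0_{F''} - S^1_{F'} \cdot S^1_{F''} ) = 0$, contradicting $c_{F'} \neq 0$. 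Hence $T^1 \cup \{\mathbf{1}\}$ would be linearly independent in $\mathbb{R}^{m_3(\Gamma)}$, yielding $m_3(\Gamma) \geq t+1$.

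The main obstacle will be pinpointing the auxiliary vector in the second step. The candidate $S^1_{F''}$ used in Theorem~\ref{main} is unsuitable in VCS$_3$ for two reasons: the column permutation $\pi$ relates $S^0$ to $S^1$ only on rows of $Y$, so one cannot deduce a parallel relation $\sum c_i S^0_{F_i} = S^0_{F''}$ when $F'' \not\subseteq Y$; and the weight-monotonicity argument from Theorem~\ref{main} relied on $\omega(S^0_X) < \omega(S^1_X)$ for qualified $X$, which need not hold here. The all-ones vector resolves both issues at once, since $\mathbf{1}\pi = \mathbf{1}$ for every column permutation and $\omega(S^0_{F''}) = \omega(S^1_{F''})$ holds simply because $F'' \in \mathcal{F}$.
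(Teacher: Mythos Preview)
Your argument is correct. The overall scheme --- show that $\{S^1_{F_i}\}_{i=1}^t$ is linearly independent via inner products with $S^0_{F''},S^1_{F''}$, then adjoin one more independent vector --- matches the paper's proof, and your use of the inclusion--exclusion identity together with conditions~2 and~3 of Definition~\ref{DAR} to collapse all but the $F'$ term is exactly the mechanism the paper employs.

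The one genuine difference is the choice of the $(t{+}1)$-st vector. The paper adjoins $S^1_{F_{t+1}}$ with $F_{t+1}\isdef\bigcup_{i=1}^t F_i$, and runs the same inner-product argument on the enlarged family $\{F_1,\ldots,F_t,F_{t+1}\}$; the extra term is handled by observing that $F''\cup F_{t+1}$ strictly contains the minimal qualified set $F'\cup F''$ and hence lies in ${\cal Q}\setminus{\cal Q}_0$, so condition~3 applies. Your choice of the all-ones vector $\mathbf{1}$ is a clean alternative: permutation invariance gives the parallel relation on the $S^0$ side for free, and the equality $\mathbf{1}\cdot S^t_{F''}=\omega(S^t_{F''})$ reduces directly to the security condition on $F''$, so no separate verification for an auxiliary set like $F_{t+1}\cup F''$ is needed. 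Both routes yield $t+1$ independent vectors in $\mathbb{R}^{m_3(\Gamma)}$; yours trades the natural ``union vector'' for a slightly slicker endgame.
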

\begin{proof}{
It is simple to prove that $m_3(\Gamma) \geq 2$ whenever $t=1$;
hence, assume that $t\geq 2$.
Let  $S^0$ and $S^1$ be $n\times m_3(\Gamma)$ the basis matrices of
access structure $\Gamma =({\cal P}, {\cal Q}, {\cal F})$. Set
$F_{t+1} \isdef F_1\cup \cdots\cup F_t$. Consider the following
sets
$$T^0\isdef \{S^0_{_{F_i}}\ | \ 1\leq i \leq t+1\},$$
$$T^1\isdef \{S^1_{_{F_i}}\ |\ 1\leq i \leq t+1\}.$$
We claim that the vectors of $T^0$ (resp. $T^1$) are linearly
independent over the real numbers. Suppose that there are some
real coefficients $a_{_{F_i}}$ such that
$\displaystyle{\sum_{i=1}^{t+1}} a_{_{F_i}} S^0_{_{F_i}}=0$.
Since $F_{t+1} \in {\cal
F}$, $S^0[F_{t+1}]$ and $S^1[F_{t+1}]$ are equal up to a permutation of
columns; consequently, $\displaystyle{\sum_{i=1}^{t+1}} a_{_{F_i}}
S^1_{_{F_i}}= 0$. Set $A\isdef \{F_i|\ i\leq t,\ a_{_{F_i}} \not =
0\}$. If $A=\emptyset$, then the assertion follows easily. Hence,
assume that $A\not=\emptyset$ and there exist two forbidden sets
$F'\in A$ and $F''\in {\cal F}$ such that $F'\cup F'' \in {\cal
Q}_0$ and for any $F_i \in A\setminus \{F'\}$, $F_i\cup F''\not
\in {\cal Q}_0$. We have
\begin{equation}\label{coffnew}
S^0_{_{F''}}\cdot\displaystyle{\sum_{a_{_{F_i}} \not = 0}}
a_{_{F_i}} S^0_{_{F_i}}=0\ \ \quad \& \ \ \quad  S^1_{_{F''}}\cdot
\displaystyle{\sum_{a_{_{F_i}} \not = 0}} a_{_{F_i}}
S^1_{_{F_i}}=0.
\end{equation}
Note that $w(S^0_{_{F''}})=w(S^1_{_{F''}})$. Also, for any $F_i
\in A\cup\{F_{t+1}\}\setminus \{F'\}$, $F''\cup F_i \not \in
{\cal Q}_0$; consequently, $S^0_{_{F''}}\cdot S^0_{_{F_i}}
=S^1_{_{F''}}\cdot S^1_{_{F_i}}$. Moreover, $F''\cup F' \in {\cal
Q}_0$. Thus,
$$S^0_{_{F''}}\cdot
S^0_{_{F'}}  \not = S^1_{_{F''}}\cdot S^1_{_{F'}}.$$
In view of Equations~~\ref{coffnew}, one can obtain that for any
$F_i \in A$, $a_{_{F_i}}=0$; accordingly,  $\dim({\rm span}(T^0))=
\dim({\rm span}(T^1))= t+1$ which implies that $m_3(\Gamma) \geq
t+1$.
}\end{proof}
In the language of hypergraph theory, for a given access
structure $\Gamma=({\cal P}, {\cal Q}, {\cal F})$, one can
introduce a lower bound for the best pixel expansion of $\Gamma
$-${\rm VCS}_2$ and $\Gamma $-${\rm VCS}_3$ in terms of an induced
matching of the hypergraph $({\cal P},{\cal Q})$.
\begin{thm}
\label{genk} Let $\Gamma=({\cal P}, {\cal Q}, {\cal F})$ be an
access structure. Also, assume that there exist disjoint
qualified sets $A_1,\ldots ,A_t$ such that for any qualified set
$B\subseteq A_1\cup \cdots \cup A_t$ one should have
$A_i\subseteq B$ for some $1\leq i \leq t$. Then
$$\min\{m_2(\Gamma), m_3(\Gamma)\} \geq
\displaystyle \sum_{i=1}^t 2^{|A_i|-1}-(t-1).$$
\end{thm}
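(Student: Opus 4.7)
The plan is to reduce the claim to Theorem \ref{main} (for $m_2$) and Theorem \ref{mainGEN} (for $m_3$) by constructing a single collection $\Omega$ of forbidden sets with $|\Omega|=\sum_i 2^{|A_i|-1}-t$; since each of those theorems concludes $m\geq|\Omega|+1$, this delivers the claimed bound at once.

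For each $i$ I would fix a distinguished element $p_i\in A_i$ and set $\Omega_i$ to be the family of all non-empty subsets of $A_i\setminus\{p_i\}$, so $|\Omega_i|=2^{|A_i|-1}-1$. Taking $\Omega=\Omega_1\cup\cdots\cup\Omega_t$ (a disjoint union by the disjointness of the $A_i$'s) gives the desired cardinality. The induced-matching hypothesis---namely, that a subset of $A_1\cup\cdots\cup A_t$ is qualified iff it contains some $A_j$---makes it routine to check that every member of $\Omega$ and the total union $\bigcup_{F\in\Omega}F=\bigcup_i(A_i\setminus\{p_i\})$ are forbidden (the total union omits each $p_i$ and hence contains no $A_j$), and that each $A_i$ is a minimal qualified set, i.e.\ $A_i\in{\cal Q}_0$.

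The crux of the argument is verifying the disjoint-pairs hypothesis of Theorem \ref{main}. For disjoint non-empty $X,Y\subseteq\Omega$, I would choose an index $i$ with $(X\cup Y)\cap\Omega_i\neq\emptyset$, let $F_0$ be a maximal element of $(X\cup Y)\cap\Omega_i$ under inclusion (swapping $X$ and $Y$ if necessary so that $F_0\in X$), and take $F'=A_i\setminus F_0$. This $F'$ is a non-empty proper subset of $A_i$ containing no $A_j$, hence $F'\in{\cal F}$, and clearly $F_0\cup F'=A_i\in{\cal Q}$. To see $F_y\cup F'\in{\cal F}$ for every $F_y\in Y$, the verification splits into two sub-cases: if $F_y\in\Omega_j$ for some $j\neq i$, the union sits in $A_i\cup A_j$ but misses both $p_j$ and the elements of $F_0$, so contains no $A_\ell$; if $F_y\in\Omega_i$, the maximality of $F_0$ combined with $X\cap Y=\emptyset$ forces $F_y\not\supseteq F_0$, whence $F_y\cup F'\subsetneq A_i$. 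The same-block sub-case is the main obstacle, and the maximality choice is exactly what unlocks it; Theorem \ref{main} then yields $m_2(\Gamma)\geq|\Omega|+1$.

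For the $m_3$ bound I would apply Theorem \ref{mainGEN} to the same $\Omega$: given non-empty $A\subseteq\Omega$ I would pick $i$ with $A\cap\Omega_i\neq\emptyset$, let $F'$ be maximal in $A\cap\Omega_i$, and set $F''=A_i\setminus F'$. Then $F''\in{\cal F}$ and $F'\cup F''=A_i\in{\cal Q}_0$, while the same two-case analysis---applied now to each $F_\ell\in A\setminus\{F'\}$ to show that $F_\ell\cup F''$ cannot equal any $A_j$ and hence is not in ${\cal Q}_0$---completes the hypothesis of Theorem \ref{mainGEN}, delivering $m_3(\Gamma)\geq|\Omega|+1$.
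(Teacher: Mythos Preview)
Your proposal is correct and follows essentially the same route as the paper: both construct the identical collection $\Omega=\bigcup_i\bigl(2^{A_i\setminus\{p_i\}}\setminus\{\emptyset\}\bigr)$ and then verify the hypotheses of Theorems~\ref{main} and~\ref{mainGEN}. The only cosmetic difference is the selection rule for the distinguished set: the paper imposes a global linear order on $\Omega$ (by block index, then by cardinality) and takes the largest element, whereas you pick any block meeting $A$ (or $X\cup Y$) and take an inclusion-maximal element there; both choices yield the same verification. One small wording issue: in your $m_3$ paragraph, ``cannot equal any $A_j$ and hence is not in ${\cal Q}_0$'' is not quite the right justification, since ${\cal Q}_0$ may contain other sets; what your two-case analysis actually shows (and what suffices) is that $F_\ell\cup F''$ contains no $A_m$ and therefore lies in ${\cal F}$.
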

\begin{proof}{
Suppose that $S^0$ and $S^1$ are $n\times m_3(\Gamma)$ basis
matrices for access structure $\Gamma =({\cal P}, {\cal Q}, {\cal
F})$. Let $|A_i|=r_i$ and $A_i=\{p_{i1},\ldots ,p_{ir_i}\}$.
Define $Y_i\isdef \{p_{i1},\ldots ,p_{i(r_i-1)}\}$. Consider
the non-empty members of power set of $Y_i$'s. Set $\{F_{i1},\ldots,
F_{i(2^{r_i-1}-1)}\}\isdef 2^{Y_i}\setminus\emptyset$ such that
$|F_{i1}|\leq |F_{i2}|\leq \cdots \leq|F_{i(2^{r_i-1}-1)}|$.
Define
$$\Omega\isdef \{ F_{ij}|\ 1\leq i \leq t,\ 1\leq j \leq 2^{r_{_{i}}-1}-1\}.$$
Consider the following ordering for $\Omega$,
$$F_{11}\leq \cdots \leq F_{1(2^{r_{_{1}}-1}-1)}\leq F_{21}\leq \cdots \leq F_{2(2^{r_{_{2}}-1}-1)}\leq \cdots
\leq F_{t(2^{r_{_{t}}-1}-1)}.$$
Assume that $A$ is a non-empty subsets of $\Omega$. Without loss
of generality, suppose that $F_{rs}$ is the largest member of $A$.
Set $F'\isdef A_r \setminus F_{rs}$. It is straightforward to
check that for any $F_{ij} \in A\setminus \{F_{rs}\}$ we have
$F_{ij}\cup F' \in {\cal F}$, whereas $F_{rs}\cup F' \in {\cal
Q}_0$. In view of Theorem \ref{mainGEN}, one can conclude that
$m_3(\Gamma) \geq \sum_{i=1}^t 2^{|A_i|-1}-(t-1)$. Similarly, one
can show that $m_2(\Gamma) \geq \sum_{i=1}^t 2^{|A_i|-1}-(t-1)$,
as desired.}
\end{proof}

Access structure $\Gamma =({\cal P}, {\cal Q}, {\cal F})$ with
$|{\cal P}|=k$ and ${\cal Q}=\{{\cal P}\}$ is well-known as  $k$
out of $k$ scheme. Theorem \ref{genk} presents a simple proof
that the pixel expansion of basis matrices of $k$ out of $k$
scheme is at least $2^{k-1}$.
\begin{precor}{\rm \cite{nash}}
Let $\Gamma=({\cal P}, {\cal Q}, {\cal F})$ be a $k$ out of $k$
scheme. Then
$$\min \{m_2(\Gamma), m_3(\Gamma)\} \geq 2^{k-1}.$$
\end{precor}

Now, we introduce a lower bound for the best pixel expansion of
$\Gamma $-${\rm VCS}_5$. One can deduce the following theorem
whose proof is almost identical to that of Theorem \ref{mainGEN}
and the proof is omitted for the sake of brevity.

\begin{thm}\label{NEWMOD}
Let $\Gamma =({\cal P}, {\cal Q}, {\cal F})$ be an access
structure and let $\Omega=\{F_1, F_2,\ldots , F_t\}$ be a
collection of forbidden sets such that $\displaystyle{\bigcup
_{i=1}^t}F_i \in {\cal F}$. Also, assume that for any non-empty
subset $A\subset \Omega$, there exist two forbidden sets $F'\in
A$ and $F''\in {\cal F}$ such that $F'\cup F'' \in {\cal Q}_0$
and for any $F_i \in A\setminus \{F'\}$, $F_i\cup F'' \in {\cal
F}$. Then we have $m_5(\Gamma) \geq t$.
\end{thm}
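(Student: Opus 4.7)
The plan is to mimic the proof of Theorem~\ref{mainGEN} almost verbatim, exploiting the fact that $\Gamma$-${\rm VCS}_5$ still enforces the usual security condition on forbidden sets, but weakening what we can assert to get a bound of $t$ rather than $t+1$. The key observation driving the change in hypothesis is that in ${\rm VCS}_5$ we are given \emph{no} structural information about non-minimal qualified sets (unlike ${\rm VCS}_3$, where the uniform Hamming-weight condition gave $S^0_{F''}\cdot S^0_{F_i}=S^1_{F''}\cdot S^1_{F_i}$ whenever $F_i\cup F''\notin\mathcal{Q}_0$). For this reason the hypothesis is strengthened to $F_i\cup F''\in\mathcal{F}$, which is precisely what is needed to recover the inner-product identity.

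First I would let $S^0,S^1$ be $n\times m_5(\Gamma)$ basis matrices and form, just as before,
$$T^0\isdef\{S^0_{_{F_i}} \mid 1\leq i\leq t\},\qquad T^1\isdef\{S^1_{_{F_i}} \mid 1\leq i\leq t\}.$$
Note that this time I do \emph{not} adjoin the row corresponding to $\bigcup_i F_i$, since we only aim at $t$ rather than $t+1$. The plan is to show that $T^0$ (equivalently $T^1$) is linearly independent over $\mathbb{R}$, which immediately yields $m_5(\Gamma)\geq t$.

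To prove independence, I would assume a dependence $\sum_{i=1}^t a_{F_i}S^0_{F_i}=0$. Because $\bigcup_i F_i\in\mathcal{F}$, the security property of ${\rm VCS}_5$ gives that $S^0[\bigcup F_i]$ and $S^1[\bigcup F_i]$ are equal up to a column permutation, so the same identity holds for the $S^1$ rows. Setting $A\isdef\{F_i \mid a_{F_i}\neq 0\}$ and assuming $A\neq\emptyset$, the hypothesis produces $F'\in A$ and $F''\in\mathcal{F}$ with $F'\cup F''\in\mathcal{Q}_0$ and $F_i\cup F''\in\mathcal{F}$ for every $F_i\in A\setminus\{F'\}$. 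Taking inner products of each dependence with $S^0_{F''}$ and $S^1_{F''}$ gives two scalar identities; subtracting them, every term with $F_i\in A\setminus\{F'\}$ cancels (by the forbidden-set condition applied to $F_i\cup F''$), and the $F''$-self-term cancels because $w(S^0_{F''})=w(S^1_{F''})$. What remains is $a_{F'}\bigl(S^0_{F''}\cdot S^0_{F'}-S^1_{F''}\cdot S^1_{F'}\bigr)=0$, whereas $F'\cup F''\in\mathcal{Q}_0$ forces this inner-product difference to be nonzero by the contrast condition. Hence $a_{F'}=0$, contradicting $F'\in A$.

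The only subtle point, and the place where I would be most careful, is the inner-product cancellation for the $F_i\in A\setminus\{F'\}$ terms: here one must use the stronger hypothesis $F_i\cup F''\in\mathcal{F}$ (not merely $F_i\cup F''\notin\mathcal{Q}_0$), since in ${\rm VCS}_5$ the only mechanism providing $S^0_{F''}\cdot S^0_{F_i}=S^1_{F''}\cdot S^1_{F_i}$ is the security condition on a forbidden set. Once this is in place, the conclusion $\dim(\mathrm{span}(T^0))=\dim(\mathrm{span}(T^1))=t$ follows, and hence $m_5(\Gamma)\geq t$.
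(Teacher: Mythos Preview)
Your proposal is correct and follows exactly the approach the paper indicates (the proof is omitted there, with the remark that it is ``almost identical to that of Theorem~\ref{mainGEN}''): you drop $F_{t+1}$ from the collection, use the strengthened hypothesis $F_i\cup F''\in\mathcal{F}$ to recover the inner-product identity via the security condition of ${\rm VCS}_5$, and conclude linear independence of the $t$ vectors. One small wording issue: there is no ``$F''$-self-term'' in the sum, but the fact $w(S^0_{F''})=w(S^1_{F''})$ (together with $w(S^0_{F'})=w(S^1_{F'})$) is what you actually need to pass from $\omega(S^0_{F'\cup F''})\neq\omega(S^1_{F'\cup F''})$ to $S^0_{F''}\cdot S^0_{F'}\neq S^1_{F''}\cdot S^1_{F'}$.
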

\section{Graph Access Structure}
In this section, we study access structures based on graphs. To
begin, some definitions are given which are used throughout this
section. A graph access structure is an access structure for which
the set of participants is the vertex set $V(G)$ of a graph
$G=(V(G), E(G))$, and the edge set of $G$ constitutes the minimal
qualified subsets of access structure, i.e., the qualified
subsets are precisely those containing an edge of G. From $G$, one
can define an access structure $G=(V(G), {\cal Q}, {\cal F})$
where ${\cal Q}_0=E(G)$.

Throughout the paper the word graph is used for a finite simple
graph. A subgraph $H$ of a graph $G$ is said to be induced if for
any pair of vertices $u$ and $v$ of $H$, $\{u,v\}$ is an edge of
$H$ if and only if $\{u,v\}$ is an edge of $G$. Two graphs $G$ and
$H$ are called disjoint if they have no vertex in common. A {\it
matching} $M_n$ is a set of $n$ disjoint edges, that is, no two
edges share a common vertex.  A subgraph of $G$ whose edge set is
non-empty and forms a complete bipartite graph is called a {\it
biclique} of $G$. A {\it biclique cover} ${\cal B}$ of $G$ is a
collection of bicliques covering $E(G)$ (every edge of G belongs
to at least one biclique of the collection). The {\it biclique
covering number} of $G$, $bc(G)$, is the fewest number of
bicliques among all biclique covers of $G$.

A {\it homomorphism} $f: G \longrightarrow H$ from a graph $G$ to
a graph $H$ is a map $f: V(G) \longrightarrow V(H)$ such that
$\{u,v\} \in E(G)$ implies $\{f(u),f(v)\} \in E(H)$. Notation
$\onehom[G,H]$ denotes the sets of {\it onto--edges}
homomorphisms from $G$ to $H$, for more on graph homomorphism see
\cite{DAHA1, HN}.

\begin{lem}\label{ontohom}
Let $G$ and $H$ be two graphs such that $\onehom[G,H]\not =
\emptyset$. Then $m_1(G)\geq m_1(H)$ and $m_2(G)\geq m_2(H)$.
\end{lem}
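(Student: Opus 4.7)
The plan is to transfer any scheme on $G$ into a scheme on $H$ using a fixed onto-edges homomorphism $f\in\onehom[G,H]$. For the ${\rm VCS}_2$ case I would start with basis matrices $S^0,S^1$ of size $|V(G)|\times m_2(G)$ realizing $G$-${\rm VCS}_2$ and define new matrices $T^0,T^1$ of size $|V(H)|\times m_2(G)$ by taking, for each $v\in V(H)$, the row $T^i[v]$ to be the bit-wise OR of the rows $S^i[x]$ with $x\in f^{-1}(v)$; vertices with empty preimage (necessarily isolated in $H$) get the all-zero row. The ${\rm VCS}_1$ case is handled identically, applied matrix-by-matrix to every $M\in{\cal C}^0\cup{\cal C}^1$ to produce new multisets ${\cal D}^0,{\cal D}^1$.

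To verify contrast for an edge $\{u,v\}\in E(H)$, the onto-edges hypothesis yields $x\in f^{-1}(u)$ and $y\in f^{-1}(v)$ with $\{x,y\}\in E(G)$. Writing $Z=f^{-1}(u)\cup f^{-1}(v)$, the construction gives $T^i_{\{u,v\}}=S^i_Z$, and since $\{x,y\}\subseteq Z$ with qualified sets monotone, $Z$ is qualified in $G$, so the threshold $t_Z$ from the $G$-scheme certifies $\omega(S^0_Z)\leq t_Z-\alpha(m)\cdot m$ and $\omega(S^1_Z)\geq t_Z$. Taking $t_{\{u,v\}}:=t_Z$ transfers the contrast with the same value of $\alpha(m)$.

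For security, suppose $Y\subseteq V(H)$ is forbidden in $H$. Then $f^{-1}(Y)$ is forbidden in $G$, because any edge of $G$ contained in $f^{-1}(Y)$ would push forward under the homomorphism $f$ to an edge of $H$ inside $Y$, contradicting $Y\in{\cal F}$. Hence in the ${\rm VCS}_2$ setting $S^0[f^{-1}(Y)]$ and $S^1[f^{-1}(Y)]$ agree up to some column permutation $\pi$, and in the ${\rm VCS}_1$ setting the corresponding multisets of restrictions coincide. Both properties propagate to $T^0[Y]$ and $T^1[Y]$ (respectively to the restrictions of matrices in ${\cal D}^i$) because bit-wise OR of a fixed family of rows commutes with any column permutation, and because a deterministic row-wise transformation preserves multiplicities in a multiset of matrices.

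The only point that really requires care is this commutation of row-OR with column permutation in the security step; once that is observed, everything else is bookkeeping, and the two conclusions $m_2(G)\geq m_2(H)$ and $m_1(G)\geq m_1(H)$ follow immediately from the constructions above.
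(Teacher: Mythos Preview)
Your proposal is correct and follows essentially the same construction as the paper: push a scheme on $G$ forward along $\sigma\in\onehom[G,H]$ by OR-ing the rows over each fiber $\sigma^{-1}(v)$, then observe that contrast and security transfer. The paper in fact gives fewer details than you do---it simply states the row-OR construction and writes ``it is easy to check'' for the verification---so your explicit arguments for contrast (via monotonicity of qualified sets) and security (via $f^{-1}(Y)\in{\cal F}$ and the commutation of row-OR with column permutations) are a welcome elaboration; just note that the contrast check should be stated for an arbitrary qualified $X\subseteq V(H)$, not only for edges, by setting $t_X:=t_{f^{-1}(X)}$, which works by exactly the argument you gave.
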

\begin{proof}{
First, we show that $m_1(G)\geq m_1(H)$. Without loss of
generality, suppose that $H$ does not have any isolated vertex.
Assume that $V(G)=\{1,\ldots,n\}$, $V(H)=\{1,\ldots,n'\}$, and
$\sigma \in \onehom[G,H]$. Also, let ${\cal C}^{0}$ and ${\cal C}^{1}$ be two
collections (multisets) of $n\times m_1(G)$ Boolean matrices
constitute a $(G ,m_1(G))$-${\rm VCS}_1$. For any $n\times
m_1(G)$ matrix $M$, define $n'\times m_1(G)$ matrix $M_{\sigma}$
as follows. For any $1\leq i \leq n'$, the $i$th row of
$M_{\sigma}$ is the vector $M_{\sigma^{-1}(i)}$; i.e., the vector
obtained by considering the bit-wise ``OR'' of the vectors corresponding to
participants in ${\sigma^{-1}(i)}$. Now, construct two
collections (multisets) of $n'\times m_1(G)$ Boolean matrices
${\cal D}^{0}$ and ${\cal D}^{1}$ as follows.
$${\cal D}^0\isdef \{M_{\sigma}|\ M\in {\cal C}^{0}\}\ \ \quad  \& \ \ \quad {\cal D}^1\isdef \{N_{\sigma}|\ N\in {\cal C}^{1}\}.$$
It is easy to check that ${\cal D}^{0}$ and ${\cal D}^{1}$ constitute an $(H
,m_1(G))$-${\rm VCS}_1$. Similarly, one can show that $m_2(G)\geq
m_2(H)$, as claimed.
}
\end{proof}

Now, we provide an upper bound for $m_4(G)$. First, we specify
the exact value of $m_4(M_n)$ as follows.

\begin{lem}\label{bi}
Let $G$ be a graph such that each connected component of $G$ is a
biclique or an isolated vertex. Then $m_4(G)=2$.
\end{lem}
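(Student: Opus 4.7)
My plan is to prove both inequalities separately, assuming $G$ has at least one biclique component so that the scheme is non-trivial.

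For the lower bound $m_4(G)\geq 2$, I would argue by contradiction from $m=1$. Choose any edge $X=\{u,v\}$ inside a biclique component; since the minimal qualified sets of $G$ are its edges, the singletons $\{u\}$ and $\{v\}$ are forbidden. With $m=1$ every Hamming weight lies in $\{0,1\}$, so $t_X\in\{0,1\}$ and the contrast condition, which demands a strict positive gap between $t_X$ and every weight on $\mathcal{C}^1$, forces exactly one of the two collections to have every matrix restricted to $\{u,v\}$ equal to zero, hence both the $u$-bit and the $v$-bit identically zero in that collection. Applying the security condition to the forbidden singletons $\{u\}$ and $\{v\}$ then propagates those zeros to the other collection; but then $\omega(M_X)=\omega(M'_X)=0$ for every $M\in\mathcal{C}^0$ and $M'\in\mathcal{C}^1$, which contradicts the contrast.

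For the upper bound $m_4(G)\leq 2$, I would construct $\mathcal{C}^0$ and $\mathcal{C}^1$ of $n\times 2$ matrices by randomizing over the connected components independently. For each biclique component $K_{A_i,B_i}$, the per-component $\mathcal{C}^{0,i}$ consists of the two matrices in which every row of $A_i\cup B_i$ equals $(0,1)$ or $(1,0)$, and the per-component $\mathcal{C}^{1,i}$ consists of the matrix with rows of $A_i$ equal to $(0,1)$ and rows of $B_i$ equal to $(1,0)$, together with its column swap. For each isolated vertex take both $(0,1)$ and $(1,0)$ in both of $\mathcal{C}^{0,i},\mathcal{C}^{1,i}$. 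The global $\mathcal{C}^0$ and $\mathcal{C}^1$ are formed by choosing one matrix from each per-component collection, so that matrices in $\mathcal{C}^t$ are indexed by a tuple of independent per-component selections. For every edge $\{u,v\}$, which lies inside a single biclique, one checks that $\omega(M_{\{u,v\}})=1$ for every $M\in\mathcal{C}^0$ and $\omega(M'_{\{u,v\}})=2$ for every $M'\in\mathcal{C}^1$, so the contrast condition is satisfied with $t_{\{u,v\}}=1$ and $\alpha=1/2$.

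The main obstacle is security, which is where the flexibility of collections over basis matrices is essential. Any forbidden set $F$ is an independent set of $G$, hence meets each biclique in a subset of only one side $A_i$ or $B_i$; on such a one-sided subset both $\mathcal{C}^{0,i}$ and $\mathcal{C}^{1,i}$ produce the same multiset of row restrictions, namely one copy each of the all-$(0,1)$ and all-$(1,0)$ configurations. Independence across components promotes this per-component match to an equality of the joint multisets of restrictions of $\mathcal{C}^0$ and $\mathcal{C}^1$ to $F$. A single pair of basis matrices would not achieve this, because a common column permutation couples the randomness of different components and a forbidden set straddling several bicliques could exploit that coupling to distinguish the two distributions; independent per-component randomization is the technical heart of the argument.
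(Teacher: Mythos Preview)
Your proof is correct and follows essentially the same approach as the paper: build $\mathcal{C}^0$ and $\mathcal{C}^1$ by independently randomizing over the connected components, with each biclique contributing either all-constant rows (for $\mathcal{C}^0$) or rows split according to the bipartition (for $\mathcal{C}^1$), and verify contrast on edges and security on independent sets component-by-component. The paper first treats the matching case explicitly and then handles general bicliques and isolated vertices with a ``similarly'' (using zero rows rather than your randomized rows for isolated vertices, which is an immaterial difference); you spell out the biclique construction directly and also supply the easy $m\geq 2$ argument that the paper omits, but the underlying idea is identical.
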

\begin{proof}{
First, we prove the assertion when $G$ is a matching. Let $M_n$ be
a matching with $n$ edges where $V(M_n)=\{1,2,\ldots,2n\}$ and
$E(M_n)=\{\{2i-1,2i\}|\ 1\leq i \leq n\}$. Set
$${\cal D}^0=\{\left ( {\begin{array}{cc}
  1 & 1 \\
  0 & 0
\end{array}}\right ),\
\left ( {\begin{array}{cc}
  0 & 0 \\
  1 & 1
\end{array}}\right )\}$$
and
$${\cal D}^1=\{\left ( {\begin{array}{cc}
  1 & 0 \\
  0 & 1
\end{array}}\right ),\
\left ( {\begin{array}{cc}
  0 & 1 \\
  1 & 0
\end{array}}\right )\}.$$
For $i=0,1$, define
$${\cal C}^i\isdef \{(A^i_1||\cdots||A^i_n)^T|\ A^i_j\in {\cal D}^i,\ 1\leq j\leq n\},$$
where $(A^i_1||\cdots||A^i_n)^T$ means the transpose of the
matrix $A^i_1||\cdots||A^i_n$. Now, one can check that ${\cal C}^0$ and
${\cal C}^1$ constitute an $(M_n ,2)$-${\rm VCS}_4$; that is,
$m_4(M_n)=2$. Note that adding isolated vertices does not alter
the pixel expansion of $G$. Since if $G$ has isolated vertices,
then one can add a zero row to any matrix of ${\cal C}^0$ and ${\cal C}^1$
corresponding to any isolated vertex. It is readily seen that the
new collections of matrices constitute a $(G ,2)$-${\rm VCS}_4$.
Similarly, if each connected components of $G$ is a biclique or
an isolated vertex, then one can show that $m_4(G)=2$.
}
\end{proof}

A strong edge coloring of a graph $G$ is an edge coloring in
which every color class is an induced matching; that is, any two
vertices belonging to distinct edges with the same color are not
adjacent. The strong chromatic index $s'(G)$ is the minimum
number of colors in a strong edge coloring of $G$. It is
well-known that $s'(G)\leq 2\Delta(G)^2$, see \cite{MR}. Now, we
present an upper bound for $m_4(G)$ in terms of strong chromatic
index and biclique covering number.

\begin{thm}\label{bcs}
Let $G$ be a non-empty graph. Then $m_4(G) \leq \min\{2bc(G),
2s'(G)\}$.
\end{thm}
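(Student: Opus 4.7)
The plan is to prove both upper bounds by a single horizontal-concatenation construction: decompose the edges of $G$ into simple pieces to which Lemma~\ref{bi} applies, build an $(H,2)$-${\rm VCS}_4$ for each piece $H$, and paste the resulting collections side by side to obtain a $(G,2k)$-${\rm VCS}_4$, where $k$ is the number of pieces. For the bound $2s'(G)$ the pieces come from a strong edge coloring (each color class is an induced matching, which together with the remaining vertices treated as isolated is covered by Lemma~\ref{bi}); for the bound $2bc(G)$ the pieces come from a biclique cover.

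To realize $m_4(G)\leq 2s'(G)$ I would fix a strong edge coloring $c:E(G)\to\{1,\dots,s'(G)\}$, let $M_i$ denote the spanning subgraph of $G$ whose edges are those of color $i$ (all other vertices isolated), and let $\mathcal{C}_i^0,\mathcal{C}_i^1$ be the Lemma~\ref{bi} collections of $|V(G)|\times 2$ matrices realizing an $(M_i,2)$-${\rm VCS}_4$. Setting
\[
\mathcal{C}^t\isdef\{N^1||N^2||\cdots||N^{s'(G)}\,:\,N^i\in\mathcal{C}_i^t\}\quad(t=0,1),
\]
the forbidden-set condition for the concatenation follows because any independent set $I$ of $G$ contains no edge of any $M_i$, so $\mathcal{C}_i^0[I]$ and $\mathcal{C}_i^1[I]$ are identical multisets, and the product structure transfers indistinguishability to $\mathcal{C}^0[I]$ and $\mathcal{C}^1[I]$. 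For an edge $e=\{u,v\}$ with $c(e)=j$ the color-$j$ summand supplies the contrast (weight $1$ versus weight $2$ in the Lemma~\ref{bi} construction); for $i\neq j$ the strong-coloring property forbids $u$ and $v$ from lying in two distinct edges of $M_i$ (otherwise $e$ would itself join two color-$i$ edges), so at most one endpoint of $e$ is covered by $M_i$, and a direct inspection of the Lemma~\ref{bi} construction shows that $\omega(N^i_e)$ is a single constant (either $0$ or $1$) across $\mathcal{C}_i^0\cup\mathcal{C}_i^1$. Summing produces a constant $\omega(M_e)$ on $\mathcal{C}^0$ and a strictly different constant on $\mathcal{C}^1$.

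The bound $m_4(G)\leq 2bc(G)$ follows by the same template with a biclique cover $\{B_1,\dots,B_{bc(G)}\}$ replacing the color classes. When $e\notin E(B_i)$, the complete-bipartiteness of $B_i$ forces $u,v$ to share a side of its bipartition or to have at least one endpoint outside $V(B_i)$, which again makes $\omega(N^i_e)$ constant in the Lemma~\ref{bi} construction; when $e\in E(B_i)$ that summand contributes a nonzero difference, and I would fix the local constructions so that all such contributions have the same sign, ensuring that they add rather than cancel when $e$ lies in several bicliques. The main obstacle common to both arguments is that the ${\rm VCS}_4$ forbidden-set condition only gives multiset-equality of restricted matrix families, not equality of row-OR weight across the collection; this is resolved precisely because Lemma~\ref{bi}'s explicit construction places every row into $\{(1,0),(0,1),(0,0)\}$, and the structural properties of strong colorings (respectively, bicliques) exclude the sole configuration that would produce a variable weight.
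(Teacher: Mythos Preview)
Your argument is correct. For the bound $m_4(G)\le 2s'(G)$ you do exactly what the paper does: take a strong edge coloring, pad each color class with isolated vertices to a spanning subgraph, apply Lemma~\ref{bi} to get $2$-column collections, and concatenate horizontally. Your verification of contrast (that in each summand $i\neq j$ at most one endpoint of $e$ is non-isolated, so the OR weight is a fixed $0$ or $1$ there) is more explicit than the paper's ``it is easy to see,'' but the construction is identical.

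The one genuine difference is the bound $m_4(G)\le 2bc(G)$. The paper does not construct anything here; it simply invokes the known chain $m_4(G)\le m_1(G)\le 2bc(G)$ from~\cite{1}. You instead run the same Lemma~\ref{bi} concatenation template with a biclique cover in place of the color classes. This works, and your handling of the two delicate points---that an edge may lie in several bicliques (so the per-summand contrasts must have a common sign) and that two endpoints on the same side of a biclique still yield a constant OR weight in both collections---is correct for the natural biclique extension of the Lemma~\ref{bi} construction. Your route is more self-contained (it avoids the detour through ${\rm VCS}_1$) and in fact shows slightly more, since the resulting scheme has every edge reconstructed as ``darker'' rather than mixing directions; the paper's citation is shorter but outsources the work.
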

\begin{proof}{
It is well-known that $m_4(G)\leq m_1(G) \leq 2bc(G)$, see
\cite{1}. Consider a strong edge coloring with $s'(G)$ colors.
Let $I_1,\ldots ,I_{s'(G)}$ be the color classes of the strong
edge coloring. For any $1\leq i \leq s'(G)$, one can extend any
$I_i$ to a spanning subgraph of $G$, say $J_i$, such that $I_i$
is an induced subgraph of $J_i$ and $E(J_i)\setminus
E(I_i)=\emptyset$. In view of Lemma \ref{bi}, for any $1\leq i
\leq s'(G)$, there exist two collections ${\cal C}^0_i$ and ${\cal C}^1_i$ of
matrices which constitute a $(J_i,2)$-${\rm VCS}_4$. For $i=0,1$,
set
$${\cal D}^i\isdef \{A^i_1||\cdots||A^i_{s'(G)}|\ A^i_j\in {\cal C}^i_j,\ 1\leq j\leq s'(G)\}.$$
It is easy to see that ${\cal D}^0$ and ${\cal D}^1$ constitute a
$(G,2s'(G))$-${\rm VCS}_4$.
}
\end{proof}

A {\it $t$-strong biclique covering} of a graph $G$ is an edge
covering, $E(G)=\displaystyle{ \cup_{i=1}^t} E(H_i)$, where each
$H_i$ is a set of disjoint bicliques, say $H_{i1}, \ldots,
H_{ir_i}$, such that the graph $G$ has no edges between $H_{ik}$
and $H_{ij}$ for any $1 \leq j < k \leq r_i$. The {\it strong
biclique covering number} $s(G)$ is the minimum number $t$ for
which there exists a $t$-strong biclique covering of $G$. It is
easy to verify that $s(G)\leq \min\{bc(G), s'(G)\}$. The proof of
the next theorem is identical to that of Theorem \ref{bcs} and
the proof is omitted for the sake of brevity. Here is a
generalization of Theorem \ref{bcs}.

\begin{thm}
Let $G$ be a non-empty graph. Then we have $m_4(G) \leq 2s(G)$.
\end{thm}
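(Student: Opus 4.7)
The plan is to mirror, almost verbatim, the concatenation strategy used in the proof of Theorem~\ref{bcs}, with the strong edge coloring replaced by a minimum strong biclique covering. First I would fix a $t$-strong biclique covering $E(G)=\bigcup_{i=1}^{t}E(H_{i})$ with $t=s(G)$. For each $i$, let $J_{i}$ be the spanning subgraph of $G$ obtained by keeping only the edges of $H_{i}$; the defining property of a strong biclique covering (no edges of $G$ between distinct bicliques inside a single $H_{i}$) guarantees that every connected component of $J_{i}$ is either one of the bicliques $H_{ij}$ or an isolated vertex. Lemma~\ref{bi} then supplies collections ${\cal C}^{0}_{i}$ and ${\cal C}^{1}_{i}$ of $n\times 2$ Boolean matrices constituting a $(J_{i},2)$-${\rm VCS}_{4}$.

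Next I would set ${\cal D}^{s}\isdef \{A^{s}_{1}||\cdots||A^{s}_{t}\ :\ A^{s}_{j}\in {\cal C}^{s}_{j},\ 1\leq j\leq t\}$ for $s\in\{0,1\}$ and verify that $({\cal D}^{0},{\cal D}^{1})$ is a $(G,2t)$-${\rm VCS}_{4}$. Security is immediate: any forbidden set $X$ of $G$ contains no edge of $G$, and since $E(J_{i})\subseteq E(G)$ the set $X$ is also forbidden in each $J_{i}$; hence the block-wise indistinguishability of the restrictions combines, under independent concatenation, to a globally indistinguishable distribution between ${\cal D}^{0}$ and ${\cal D}^{1}$.

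The step I expect to be the main obstacle is verifying contrast, namely that $\omega(M_{X})$ takes a single constant value $t_{X}$ on ${\cal D}^{0}$ for every minimal qualified set $X=\{u,v\}\in E(G)$. The strong biclique hypothesis is exactly what is needed here: if $\{u,v\}\notin E(H_{j})$, then $u$ and $v$ cannot lie in two different bicliques of $H_{j}$, since this would produce an edge of $G$ between distinct bicliques of $H_{j}$. Combined with the observation that in the construction of Lemma~\ref{bi} every row has weight in $\{0,1\}$, this forces the block-$j$ contribution to $\omega(M_{X})$ to be a deterministic constant in $\{0,1\}$ that, by the security of the $j$-th block, agrees between ${\cal C}^{0}_{j}$ and ${\cal C}^{1}_{j}$. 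The blocks $i$ with $\{u,v\}\in E(H_{i})$ then supply the only weight differences between ${\cal D}^{0}$ and ${\cal D}^{1}$, yielding contrast at least $1$ on $2t$ subpixels and hence $m_{4}(G)\leq 2s(G)$.
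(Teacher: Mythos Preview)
Your proposal is correct and follows exactly the route the paper takes: the paper simply states that the proof is identical to that of Theorem~\ref{bcs}, and your concatenation of the $(J_i,2)$-${\rm VCS}_4$ schemes from Lemma~\ref{bi} over a minimum strong biclique covering is precisely that argument, with the strong-biclique hypothesis invoked exactly where you indicate. The only detail to tighten in your contrast check is the sub-case where $u$ and $v$ lie on the same side of a single biclique of $H_j$: knowing merely that rows have weight in $\{0,1\}$ does not by itself make $\omega(M_{\{u,v\}})$ deterministic there, but in the explicit construction of Lemma~\ref{bi} all vertices on one side of a biclique receive the \emph{same} row, so the OR again has weight~$1$ and your conclusion goes through.
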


Suppose that $P_{_{4}}$ is a path with the vertex set $\{v_{_{1}},
v_{_{2}},v_{_{3}}, v_{_{4}}\}$
 and the edge set $\{\{v_{_{1}},v_{_{2}}\},\{v_{_{2}},v_{_{3}}\},\{v_{_{3}},v_{_{4}}\}\ \}$.
 Set $F{_{1}}\isdef \{v_{_{1}}\}$ and $F{_{2}}\isdef
 \{v_{_{3}}\}$. It is easy to see that $F{{_1}}$ and $F{{_2}}$ satisfy
Theorem \ref{main}; consequently, $m_2(P_{_{4}})\geq 3$.
Furthermore, it is easy to check that
$$S^0=\left ( {\begin{array}{ccc}
0 & 1 & 0\\
0 & 1 & 1\\
0 & 1 & 1\\
0 & 0 & 1
\end{array}}\right )\quad \quad {\rm and}\quad \quad
S^1=\left ( {\begin{array}{ccc}
0 & 1 & 0\\
1 & 0 & 1\\
0 & 1 & 1\\
1 & 0 & 0
\end{array}}\right )$$

are the basis matrices of $(P_{_{4}} ,3)$-${\rm VCS}_2$. Thus, $m_2(P_{_{4}})=3$
which implies that the lower bound mentioned in Theorem
\ref{main} is sharp.

The following corollary is a special case of Theorem \ref{genk}.
\begin{precor}\label{GM}
Let $G$ be a graph access structure and $e_1,\ldots ,e_t$ be an
induced matching of $G$. Then we have
$$\min \{m_2(G), m_3(G)\}\geq t+1.$$
\end{precor}

Now, we show that $m_3(M_2)=3$. Consider the following matrices
$$S^0=\left ( {\begin{array}{ccc}
1 & 0 & 1\\
0 & 1 & 1\\
1 & 1 & 0\\
1 & 1 & 0
\end{array}}\right )\quad \quad {\rm and}\quad \quad
S^1=\left ( {\begin{array}{ccc}
1 & 0 & 1\\
1 & 0 & 1\\
1 & 1 & 0\\
0 & 1 & 1
\end{array}}\right ).$$

One can check that $S^0$ and $S^1$ are the basis matrices of $(M_2
,3)$-${\rm VCS}_3$ which implies that the lower bound mentioned
in Corollary \ref{GM} is sharp.

\ \\
{\bf Acknowledgement:} The authors wish to thank anonymous
referees for their invaluable comments.

\end{document}